\documentclass[conference]{IEEEtran}
\IEEEoverridecommandlockouts
\usepackage[T1]{fontenc}
\usepackage{datetime}
\usepackage{listings}
\usepackage{graphicx}
\usepackage{float}
\usepackage{subfigure} 
\usepackage{cite}
\usepackage{caption} 
\usepackage{color}
\usepackage{amsthm}
\usepackage{bm}
\usepackage{mathrsfs}
\newtheorem{theorem}{Theorem}

\newtheorem{lemma}[theorem]{Lemma}
\captionsetup{font=footnotesize}

\usepackage[noend]{algpseudocode}

\usepackage{algorithmicx,algorithm}

\usepackage{amsmath}

\interdisplaylinepenalty=2500 

\usepackage[cmintegrals]{newtxmath}  

\hyphenation{op-tical net-works semi-conduc-tor}
\def\BibTeX{{\rm B\kern-.05em{\sc i\kern-.025em b}\kern-.08em
    T\kern-.1667em\lower.7ex\hbox{E}\kern-.125emX}}
\begin{document}
\title{Sparse Signal Processing for Massive
	Connectivity via Mixed-Integer
	Programming} 

\author{Shuang Liang, Yuanming Shi, and Yong Zhou\\
	School of Information Science and Technology, ShanghaiTech University, Shanghai \\
	\{liangshuang, shiym, zhouyong\}@shanghaitech.edu.cn
}

\maketitle

\begin{abstract}
Massive connectivity is a critical challenge of Internet of Things (IoT) networks. In this paper, we consider the grant-free uplink transmission of an IoT network with a multi-antenna base station (BS) and a large number of single-antenna IoT devices. Due to the sporadic nature of IoT devices, we formulate the joint activity detection and channel estimation (JADCE) problem as a group-sparse matrix estimation problem. Although many algorithms have been proposed to solve the JADCE problem, most of them are developed based on compressive sensing technique, yielding suboptimal solutions. In this paper, we first develop an efficient weighted $l_1$-norm minimization algorithm to better approximate the group sparsity than the existing mixed $l_1/l_2$-norm minimization. 
Although an enhanced estimation performance in terms of the mean squared error (MSE) can be achieved, the weighted $l_1$-norm minimization algorithm is still a convex relaxation of the original group-sparse matrix estimation problem, yielding a suboptimal solution. 
To this end, we further reformulate the JADCE problem as a mixed integer programming (MIP) problem, which can be solved by using the branch-and-bound method. 
As a result, we are able to obtain an optimal solution of the JADCE problem, which can be adopted as an upper bound to evaluate the effectiveness of the existing algorithms. Moreover, we also derive the minimum pilot sequence length required to fully recover the estimated matrix in the noiseless scenario. Simulation results show the performance gains of the proposed optimal algorithm over the proposed weighted $l_1$-norm algorithm and the conventional mixed $l_1/l_2$-norm algorithm.  Results also show that the proposed algorithms require a short pilot sequence than the conventional algorithm to achieve the same estimation performance. 
\end{abstract}
\begin{IEEEkeywords}
	Massive connectivity, joint activity detection and channel estimation, mixed integer programming, optimal solution
\end{IEEEkeywords}

\section{Introduction}
 Providing massive connectivity for machine-type communications (MTC) is a key issue that needs to be addressed in the fifth-generation (5G) wireless networks\cite{massive}. The typical applications of the massive MTC use cases include smart homes, wearables, environment sensing and healthcare. With the increasing popularity of Internet of Things (IoT) services and the continuous reduction of the cost of IoT devices, the number of IoT devices is expected to reach 75.4 billion by 2025\cite{smart}. However, due to the limited number of orthogonal signature sequences, it is impossible to assign mutually orthogonal signature sequences to all devices\cite{grantfree}. A key characteristic of IoT data traffic is the sporadic transmission\cite{sparse}, i.e., only a portion of IoT devices are active at any instant. Exploiting this feature, a sparse linear model can be established for massive connectivity, which enables the joint device activity detection and channel estimation.
\par Grant-based random access was initially proposed to coordinate multiple IoT devices to access the network. However, this scheme cannot be applied to support the massive connectivity due to the following reasons. First, the excessive signaling overhead leads to large communication overhead. Second, it is generally infeasible to assign orthogonal signature sequences to all IoT devices. To solve these problems, the 3rd generation partnership project (3GPP) for 5G new radio (NR) proposed the grant-free random access scheme \cite{grantfree}. With grant-free random access, various methods\cite{shi2020low} have been proposed to tackle the device activity detection and channel estimation. Particularly, the authors in \cite{5671560} adopted the compressed sensing (CS) technique to achieve joint device activity detection and channel estimation. To improve the efficiency, the authors in \cite{bayesian} proposed a modified Bayesian compressed sensing algorithm. In order to improve the computational efficiency under massive connectivity, the approximate message passing (AMP) algorithm was proposed to solve the CS problems\cite{amp1,amp2,amp3}. However, AMP algorithms often fail to converge when the preamble signature matrix is mildly ill-conditioned or non-Gaussian\cite{fletcher2018plug}. The authors in \cite{lasso1,lasso2} introduced a mixed $l_1/l_2$-norm convex relaxation method to reformulate the problem as a form of group LASSO. Many methods can be used to solve the LASSO problem, such as interior-point\cite{lasso1} method and iterative shrinkage thresholding
algorithm (ISTA)\cite{ista} method. However, these methods normally take a large number
of iterations for convergence. In order to accelerate the convergence, the deep learning based methods have recently been proposed for sparse signal processing. The data-driven method such as learned iterative shrinkage thresholding algorithm (LISTA)\cite{lista} was proposed to speed-up the sparse signal recovery. Moreover, \cite{chen2018theoretical} established the weight coupling structure for LISTA and proved that LISTA achieves a linear convergence rate. The authors in\cite{lista2} developed LISTA for group sparsity estimation problem. However, most of the existing methods relied on suboptimal formulations, where the optimal solution cannot be achieved.

\par In this paper, we consider the grant-free uplink transmission of a single-cell IoT network. Taking into account the sporadic transmission, we formulate a group-sparsity estimation problem for joint device activity detection and channel estimation, and propose a novel method to obtain the optimal solution. We reformulate the original problems based on $l_0$-norm-based. Motivated by \cite{exact}, we develop the exact sparse approximation problems via Mixed-Integer Programs (MIP) for group sparse problem, and use Branch and Bound ($B\&B$) algorithm solve it. In addition, we derive the minimum pilot sequence length required to fully recover the estimated matrix in the noiseless scenario. Simulation results demonstrate that our method outperforms classical methods and the exact estimation can be guaranteed by using shorter sequence length. 
\section{System Model And Problem Formulation}
\subsection{System Model}
Consider the grant-free uplink transmission of a single-cell IoT network consisting of one $M$-antenna BS and $N$ single-antenna IoT devices. We denote $\mathcal{N}=\{1,\dots,N\}$ as the index set of IoT devices. We consider full frequency reuse and quasi-static fading channels. In each transmission block, the transmission activity of each device is independent. We denote $a_n$ as the indicator of the activity of device $n$, where $a_n=1$ if the device $n$ is active and $a_n=0$ otherwise. Assume the transmissions of active devices are synchronized, we express the pilot signals superimposed at the BS as 
\begin{gather}
\bm{y}(l)=\sum_{n=1}^{N}a_n \bm{h}_n s_n(l)+\bm{z}(l),l=1,
\dots,L
\end{gather}
where $\bm{y}(l)\in \mathbb{C}^M$ denotes the $l$-th symbol received by the BS, $\bm{h}_n\in \mathbb{C}^M$ denotes the channel coefficient vector of the link between IoT device $n$ and the BS, $s_n(l)\in \mathbb{C}$ denotes the $l$-th signature symbol transmitted by device $n$, $L$ denotes the length of the signature sequence, and $\bm{z}(l)\in \mathbb{C}^M$ denotes the additive white Gaussian noise (AWGN) vector at the BS.
\par Since the number of devices is usually much larger than the length of the signature sequence, i.e., $N\gg L$, allocating mutually orthogonal signature sequences
to all IoT devices is not possible. Hence, we assume that each device
is assigned a unique signature sequence. We generate the signature
sequences according to the independent and identically distributed (i.i.d.) complex Gaussian distribution, i.e., $s_n(l)\sim \mathcal{CN}(0,1)$. These signature sequences are generally non-orthogonal.
\par By accumulating the signal vectors over $L$ time slots, we denote the aggregated received pilot signal matrix
$\bm{Y}=[\bm{y}(1),\dots,\bm{y}(L)]^T\in \mathbb{C}^{L\times M}$, the channel matrix 
$\bm{H}=[\bm{h}_1,\dots,\bm{h}_N]^T\in \mathbb{C}^{N\times M}$, the additive noise matrix 
$\bm{Z}=[\bm{z}(1),\dots,\bm{z}(L)]^T\in \mathbb{C}^{L\times M}$, and the pilot matrix
$\bm{S}=[\bm{s}(1),\dots,\bm{s}(L)]^T\in \mathbb{C}^{L\times N}$, where $\bm{s}(l)=[s_1(l),\dots,s_N(l)]^T\in \mathbb{C}^N$.
Thus, (1) can be rewritten as 
\begin{gather}
\bm{Y}=\bm{SX}+\bm{Z},
\end{gather}
where matrix $\bm{X}=\bm{AH} \in C^{N\times M}$ with 
$\bm{A}=\mathrm{diag} \left( a_1,\dots,a_N\right)\in \mathbb{R}^{N\times N}$ being the diagonal activity matrix. Hence, matrix $\bm{X}$ endows with a group sparse structure. In other words, each column of matrix $\bm{X}$ is sparse and all columns have the same sparse structure. The goal is to detect the device activity $\bm{A}$ and estimate the channel matrix $\bm{H}$ by recovering $\bm{X}$ from the noisy observation $\bm{Y}$. 

\subsection{Problem Formulation}
Because of the multi-antenna BS, we cannot directly induce the sparsity of matrix $\bm{X}$ by using the widely adopted $l_0$-norm. Instead, by exploiting the group sparse structure, we use the $l_0$-norm of first column of matrix ${\bm{X}}$ to represent the sparsity of matrix ${\bm{X}}$. Taking into account the structure of matrix ${\bm{X}}$, the optimization problem can be written as 
\begin{eqnarray}
\mathscr{P}_1:
\mathop {\textrm{minimize}}_{{{\bm{X}}}}&& ||{\bm{X}}^1||_0\nonumber\\
\textrm{subject to}&&||{\bm{Y}}-{\bm{S}}{\bm{X}}||_F\leq \epsilon\nonumber,\\
&& I({\bm{X}}^j)=I({\bm{X}}^k), {\forall j\ne k},
\end{eqnarray}
where ${\bm{X}}^j$ denotes the $j$-th column of ${\bm{X}}$, $\epsilon$ is a predefined error limit, and $I(\bm{x})$ is the support (non-zero indicator variable) of vector $\bm{x}$. Such a problem is NP-hard due to the inherent combinatorial nature. The existing methods\cite{massive} replace the $l_0$-norm objective function by its $l_1$-norm, and approximate problem $\mathscr{P}_1$ as follows
\begin{eqnarray}
\mathscr{P}_2:
\mathop {\textrm{minimize}}_{{\bm{X}}}&& \Phi({\textbf{X}}):=\sum_{i=1}^{N}||{\bm{X}}_i||_2\nonumber\\
\textrm{subject to}&&||{\bm{Y}}-{\bm{S}}{\bm{X}}||_F\leq \epsilon,
\end{eqnarray}
where ${\bm{X}}_i$ is the $i$-th row of ${\bm{X}}$ and $\Phi({\bm{X}})$ induces the group sparsity via the mixed $l_1/l_2$-norm. The $l_2$-norm $||{\bm{X}}_i||_2$ bounds the magnitude of ${\bm{X}}_i$ and the $l_1$-norm induces the sparsity of 
\begin{eqnarray}
[||{\bm{X}}_1||_2,...,||{\bm{X}}_N||_2].
\end{eqnarray}
Problem $\mathscr{P}_2$ is a convex problem, which can be solved by many existing algorithms. With the estimated matrix ${\bm{X}}$, the device activity can be obtained by setting $a_i=1$ if $||{\bm{X}}_i||_2\geq \gamma_0$ for a predefined threshold $\gamma_0$, and $a_i=0$ otherwise. 
\par To reduce the number of measurements required for accurate recovery, the authors in \cite{reweighted} formulated a reweighted $l_1$-norm-minimization problem, which can be reformulated as a group sparse optimization problem.  
\begin{eqnarray}
\mathscr{P}_{3}:
\mathop {\textrm{minimize}}_{{\bm{X}}}&&\sum_{i=1}^{N} w_i||{\bm{X}}_i||_2\nonumber\\
\textrm{subject to}&&||{\bm{Y}}-{\bm{S}}{\bm{X}}||_F\leq \epsilon,
\end{eqnarray}
where $w_i$ is the weight of $||{\bm{X}}_i||_2$. After each iteration, we update $w_i=\frac{1}{||{\bm{X}}_i||_2}$ to get a new problem and repeat the process until convergence. We summarize the algorithm for solving problem $\mathscr{P}_{3}$  in Algorithm 1.
\begin{algorithm}[t]
	\caption{Reweighted $l_1/l_2$ minimization problem } 
	\hspace*{0.02in} {\bf Input:} 
	 parameters ${\bm{Y}}$, ${\bm{S}}$
	\begin{algorithmic}[1]
		\State Set the iteration count $n$ to zero and $w_i^{(0)}=1,i=1,\dots,N$ 
		\State Solve the weighted $l_1/l_2$ minimization problem $\mathscr{P}_{3}$
		\State Update the weights: $w_i^{(n)}=\frac{1}{||{\bm{X}}_i||_2}$, $i=1,\dots,N$
		\State Terminate until convergence or when $n$ attains a specific number. Otherwise, update iteration count $n$ and go to step 2.
	\end{algorithmic}
	\hspace*{0.02in} {\bf Output:} 
estimation of ${\bm{X}}$
\end{algorithm}
\par It is obvious that problems $\mathscr{P}_2$ and $\mathscr{P}_3$ are convex relaxation of problem $\mathscr{P}_1$  and their solutions are suboptimal to the original problem. To find the global optimal solution, which can be adopted as an upper bound to evaluate the effectiveness of the existing algorithms, we focus on solving the exact optimization of the $l_0$-norm-based problem through Mixed-Integer Quadratically Constrained Program (MIQCP)\cite{exact}.
\section{Proposed Global Optimal Algorithm}
In this section, we reformulate  problem  $\mathscr{P}_1$  as an MIQCP problem\cite{exact} and solve the problem by using the branch and bound algorithm. 

\subsection{Reformulation of $l_0$-norm-based Objective Function}
In this subsection, we reformulate the objective function based on $l_0$-norm to the summation of binary variables. According to\cite{lista2}, we can equivalently convert a complex variable to a real variable. To facilitate the reformulation, we only consider the real variables.

We first present the boundedness assumption and the `Big-$\beta$' that enable us to express $\mathscr{P}_1$ as a MIQCP problem. In particular, we introduce an additional binary optimization variable $\bm{b}\in\{0,1\}^N$, such that 
\begin{eqnarray}
b_i=0 \iff {\bm{X}}_i=\bm{0}.
\end{eqnarray}
Hence, the non-linear sparsity measure of ${\bm{X}}$ can be  equally represented by the linear term $\sum_{i=1}^N b_i$. The minimization of $||{\bm{X}}^{1}||_0$ should be transformed into (in)equality formulation to be compatible with MIP. One standard way is to assume that each entry of  ${\bm{X}}$ satisfies the following constraints for some sufficiently large pre-defined value $\beta>0$:
\begin{eqnarray}
-\beta b_i \leq {\textbf{X}}_{ij} \leq \beta b_i.
\end{eqnarray}
This assumption ensures that problem  $\mathscr{P}_1$ admits bounded optimal solutions. Parameter $\beta$ needs be large enough in order that
$\max\{|{\bm{X}}_{ij}|\} \leq \beta$ at any desirable optimal solution. At the same time, to improve the computational efficiency, the bound must be as tight as possible. Thus, it is critical to tune the value of $\beta$. In the problem addressed in this paper, satisfactory results can be obtained by using a simple empirical
3-$\sigma$ rule.
The reformulation of $l_0$-norm-based constraints and objective function are obtained through the following two lemmas.
\begin{lemma}
	If the group sparsity of ${\bm{X}}$ is less than or equal to $K$, then there exits a vector $\bm{b}$ that  satisfies (9) and $\sum_{i=1}^{N}b_i \leq K$, and vice versa. Mathematically, we have
\end{lemma}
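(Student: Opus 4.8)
The plan is to establish the claimed equivalence by proving both directions of the biconditional, using the defining property of the binary vector $\bm{b}$ given in (8), namely $b_i = 0 \iff {\bm{X}}_i = \bm{0}$, together with the Big-$\beta$ constraints (9). First I would fix notation: since ${\bm{X}}$ has the group-sparse structure described earlier (all columns share the same support), the ``group sparsity of ${\bm{X}}$'' is precisely the number of nonzero rows of ${\bm{X}}$, which equals $\|{\bm{X}}^1\|_0$ by the support-consistency constraint $I({\bm{X}}^j) = I({\bm{X}}^k)$. Call this quantity $s$.

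For the forward direction, assume $s \le K$. I would construct $\bm{b}$ explicitly by setting $b_i = 1$ whenever row ${\bm{X}}_i \ne \bm{0}$ and $b_i = 0$ otherwise. I then verify that this $\bm{b}$ satisfies (9): if $b_i = 1$ the constraint reads $-\beta \le {\bm{X}}_{ij} \le \beta$, which holds for every $j$ by the boundedness assumption that $\beta$ is chosen with $\max\{|{\bm{X}}_{ij}|\} \le \beta$; if $b_i = 0$ the constraint forces ${\bm{X}}_{ij} = 0$ for all $j$, which holds because row $i$ is identically zero by construction. Finally, $\sum_{i=1}^N b_i$ counts exactly the nonzero rows, so $\sum_{i=1}^N b_i = s \le K$.

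For the reverse direction, suppose there exists $\bm{b} \in \{0,1\}^N$ satisfying (9) with $\sum_{i=1}^N b_i \le K$. Whenever $b_i = 0$, constraint (9) gives $-\beta\cdot 0 \le {\bm{X}}_{ij} \le \beta\cdot 0$, i.e. ${\bm{X}}_{ij} = 0$ for all $j$, so row ${\bm{X}}_i = \bm{0}$. Hence the set of nonzero rows is contained in $\{i : b_i = 1\}$, which has cardinality $\sum_{i=1}^N b_i \le K$. Therefore the number of nonzero rows of ${\bm{X}}$ — the group sparsity $s$ — is at most $K$, completing the equivalence. It then follows that $\|{\bm{X}}^1\|_0 = s$ can be replaced by $\sum_{i=1}^N b_i$ in the objective, and the $l_0$-minimization is recast with a linear objective plus the linear constraints (9).

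I do not anticipate a serious obstacle here, since the argument is essentially bookkeeping once the Big-$\beta$ constraint is in place; the one point requiring care is the direction of the implication in (8) that is actually used. Note that (9) only directly enforces $b_i = 0 \Rightarrow {\bm{X}}_i = \bm{0}$; it does not by itself enforce ${\bm{X}}_i \ne \bm{0} \Rightarrow b_i = 1$. For the equivalence of the feasible sets this one-sided implication suffices (a larger $\bm{b}$ only weakens (9)), but for the claim ``$\sum_i b_i$ equals the group sparsity'' at an optimum one must additionally observe that the minimization over $\bm{b}$ in the MIQCP will drive each $b_i$ to $0$ whenever permissible, thereby recovering the tight relation $b_i = \mathbf{1}\{{\bm{X}}_i \ne \bm{0}\}$ at optimality. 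I would state this explicitly so that the reformulation of the objective is justified, not merely the reformulation of the constraints.
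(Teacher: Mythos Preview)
Your proposal is correct and follows essentially the same approach as the paper: construct $\bm{b}$ from the row-support of ${\bm{X}}$ for the forward direction, and use the implication $b_i=0 \Rightarrow {\bm{X}}_i=\bm{0}$ coming from the Big-$\beta$ constraint for the reverse direction. The only cosmetic difference is that the paper phrases the reverse direction as a proof by contradiction whereas you give it directly; your version is in fact more explicit and careful, and your closing remark about the one-sided nature of (9) and why minimization recovers the tight indicator at optimality is a useful observation that the paper leaves implicit until Lemma~2.
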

 \begin{eqnarray}
||{\bm{X}}^{1}||_0\leq K \iff\left\{
\begin{aligned}
&\exists \bm{b}\in\{0,1\}^N \text{such that} \\
&\sum_{i=1}^{N}b_i \leq K,(i) \\
&-\beta b_i \leq {\textbf{X}}_{ij} \leq \beta b_i.(ii)
\end{aligned}
\right.
\end{eqnarray}
\begin{proof}
 First, we proof the sufficiency. According to (9), we can always construct vector that $\bm{b}$ satisfies $\sum_{i=1}^{N}b_i \leq K$. Second, we proof the necessity. Let $\bm{b}$ satisfy ($i$) and ($ii$), and suppose $||{\bm{X}}^{1}||_0 > K$. From ($ii$), we have ($b_i=0$) $\Rightarrow$ (${\bm{X}}^i=0$) and (${\bm{X}}^i\not =0$) $\Rightarrow$ ($b_i\not = 0$). Hence $\sum_{i=1}^N b_i > K$ , which contradicts with ($i$). Consequently, we have $||{\bm{X}}^1||_0 \leq K$.
\end{proof}
\begin{lemma}
	 Minimizing the group sparsity of ${\bm{X}}$ is equivalent to minimizing the sum of binary variables $b_i$, i.e.,
\end{lemma}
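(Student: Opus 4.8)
The plan is to read the claim off Lemma 1 by comparing the optimal values of the two problems. Let $\mathscr{P}_4$ denote the reformulation of $\mathscr{P}_1$ obtained by introducing the binary vector $\bm{b}\in\{0,1\}^N$, appending the Big-$\beta$ coupling constraints (9), and replacing the objective $\|\bm{X}^1\|_0$ by $\sum_{i=1}^N b_i$; write $K^\star$ and $\tilde K^\star$ for the optimal values of $\mathscr{P}_1$ and $\mathscr{P}_4$. I would prove $K^\star=\tilde K^\star$ via the two inequalities separately and then observe that a minimizer of either problem induces a minimizer of the other, which is what "equivalent" should mean here.

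For $\tilde K^\star\le K^\star$: take $\bm{X}$ attaining $\|\bm{X}^1\|_0=K^\star$ in $\mathscr{P}_1$ and set $b_i=1$ if $\bm{X}_i\neq\bm{0}$ and $b_i=0$ otherwise. The data-fidelity constraint $\|\bm{Y}-\bm{SX}\|_F\le\epsilon$ carries over verbatim; constraint (9) holds because $|X_{ij}|\le\beta$ by the boundedness ($3$-$\sigma$) assumption whenever $b_i=1$ and $X_{ij}=0=\beta b_i$ whenever $b_i=0$; and the common-support constraint $I(\bm{X}^j)=I(\bm{X}^k)$ gives $\sum_i b_i=\|\bm{X}^1\|_0=K^\star$. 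Hence $(\bm{X},\bm{b})$ is feasible for $\mathscr{P}_4$ with objective $K^\star$. This is precisely the sufficiency direction of Lemma 1 with $K=K^\star$.

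For $K^\star\le\tilde K^\star$: let $(\bm{X},\bm{b})$ be optimal for $\mathscr{P}_4$. The necessity direction of Lemma 1 with $K=\sum_i b_i$ yields $\|\bm{X}^1\|_0\le\sum_i b_i=\tilde K^\star$, and $\bm{X}$ still satisfies the data-fidelity constraint. The one delicate point — which I expect to be the only real obstacle — is that (9) by itself forces only the rows with $b_i=0$ to vanish, not the full common-support condition of $\mathscr{P}_1$. I would dispatch this by first noting that optimality of $\bm{b}$ forces $b_i=0\Leftrightarrow\bm{X}_i=\bm{0}$ (if some $b_i=1$ while $\bm{X}_i=\bm{0}$, resetting $b_i\to0$ preserves feasibility and strictly lowers $\sum_i b_i$, a contradiction), so $\tilde K^\star$ equals the number of nonzero rows of $\bm{X}$; and then by checking that this minimum row count is attained by a common-support matrix — which is consistent with the model, since the intended solution $\bm{X}=\bm{AH}$ is common-support and, when $\|\bm{Y}-\bm{SX}\|_F<\epsilon$, any nonzero row can be completed to full support by an arbitrarily small perturbation. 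Combining the two inequalities gives $K^\star=\tilde K^\star$, i.e., minimizing the group sparsity of $\bm{X}$ is equivalent to minimizing $\sum_{i=1}^N b_i$ subject to (9), with matching minimizers; the bulk of the argument is a direct transcription of Lemma 1, and only the group-structure bookkeeping in the second inequality needs care.
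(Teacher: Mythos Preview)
Your approach is essentially the paper's: both directions are read off Lemma~1 by comparing optimal values, with one inequality obtained by constructing $\bm b$ from an optimal $\bm X$ and the other by contraposition. The paper's argument is terser (a one-line contradiction each way) but structurally identical.

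One point worth flagging: you are proving slightly more than Lemma~2 as stated. In the paper's display~(10) the constraint set $F$ appears on \emph{both} sides of the equivalence, so the common-support condition $I(\bm X^j)=I(\bm X^k)$ is retained in the MIP formulation of Lemma~2; the ``delicate point'' you identify therefore does not arise for this lemma, and your perturbation argument is unnecessary. (The common-support constraint is silently dropped only later, when the paper writes down $\mathscr{P}_4$.) If you do want the stronger statement that directly matches $\mathscr{P}_4$, be aware that the perturbation step is the weakest link: when the optimal $\bm X$ sits on the boundary $\|\bm Y-\bm S\bm X\|_F=\epsilon$, ``completing a nonzero row to full support by an arbitrarily small perturbation'' need not preserve feasibility without further argument.
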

\begin{eqnarray}
\mathop{\textrm{minimize}}_{{\bm{X\in F}}}||{\bm{X}}^1||_0\iff \left\{
\begin{aligned}
&\mathop{\textrm{minimize}}_{\bm{X\in F}}\sum_{i=1}^N b_i\\
&s.t. -\beta b_i \leq {\textbf{X}}_{ij} \leq \beta b_i \\
&\bm{b}\in\{0,1\}^N
\end{aligned}
\right.
\end{eqnarray}
where $F$ represents the feasible domain of the problem under consideration. 
\begin{proof}
 First, we proof the sufficiency. Suppose $\mathop{\textrm{argmin}}||{\bm{X}}^1||_0$=$\hat{\bm{X}}$, and $||\hat{\bm{X}}^1||_0=K_0$. From Lemma 1, one has $\sum_{i=1}^Nb_i^*$=$K_0$, if there is $\sum_{i=1}^N b_i<K_0$, then there is $||{\bm{X}}^1||_0<K_0$, which contradicts $\mathop{\textrm{argmin}}||{\bm{X}}^1||_0$=$\hat{\bm{X}}$. Consequently, $\mathop{\textrm{argmin}}{\sum_{i=1}^Nb_i}$=${\bm{b^*}}$. Second, the proof of necessity is similar to that of sufficiency. 
\end{proof} 
\par Based on Lemma 1 and Lemma 2, we turn the objective function of minimizing sparsity into the form that is compatible with MIP.
\subsection{Reformulation of $l_F$-norm-based Constraints}
In this subsection, we reformulate the constraint of (4) to the standard form of MIQCP.
\par In the multi-antenna scenario, $||{\bm{Y}}-{\bm{S}}{\bm{X}}||_2\leq \epsilon$ cannot be directly expanded as in the single-antenna scenario. By constraining the error $||{\bm{Y}}-{\bm{S}}{\bm{X}}||_2$, we have
\begin{eqnarray}
&({\bm{X}}^i)^T{\bm{S}}^T{\bm{S}}{\bm{X}}^i-2({\bm{Y}}^i)^T{\bm{S}}{\bm{X}}^i \leq \epsilon_i^2-||{\bm{Y}}^i||_2^2,\\
&\sum_{i=1}^{M}\epsilon_i^2=\epsilon^2.
\end{eqnarray}
\par With the above transformation, problem $\mathscr{P}_1$ is equal to MIQCP $\mathscr{P}_4$ as follows
\begin{eqnarray}
\mathscr{P}_4:\nonumber
\mathop {\textrm{minimize}}_{\bm{X},{\bm{b}}}&& \bm{1}^T\bm{b}\nonumber\\
\textrm{subject to}&&({\bm{X}}^i)^T{\bm{S}}^T{\bm{S}}{\bm{X}}^i-2({\bm{Y}}^i)^T{\bm{S}}{\bm{X}}^i \leq \epsilon_i^2-||{\bm{Y}}^i||_2^2,\nonumber\\
&& \sum_{i=1}^{M}\epsilon_i^2=\epsilon^2,\nonumber\\
&& -\beta b_i \leq {\textbf{X}}_{ij} \leq \beta b_i,\nonumber\\
&& b_i\in \{0,1\}.
\end{eqnarray}
\par The MIP problems are NP-hard. Transforming problem $\mathscr{P}_1$ to problem $\mathscr{P}_4$ does not reduce the  computational complexity. However, such a reformulation can help prove the optimality of the returned solutions. Due to the progress in both hardware and algorithm, MIP problems can be solved by many state-of-the-art solvers such as CPLEX\cite{bixby2012brief}, SCIP, and GUROBI. In this paper, we choose CPLEX because of its solving speed. It is developed by IBM and freely available for research.
\par The main method that the CPLEX MIP solver used is the branch-and-bound ($B\&B$) algorithm. It implements the $B\&B$ strategy by relaxing the integer variables. By fixing some integer variables and relaxing another integer variable, each branch generates a subproblem. Lower bounds are obtained by solving each subproblem. If the returned solutions are just integers, then upper bounds are obtained.
Some branches are cut if their solutions are outside the bound. Repeating the process until the gap is zero or small enough, the optimal solution can be found. In this process, CPLEX incorporates several technologies to reduce the solution time .
\subsection{Special Case}
\par We consider a noiseless scenario in this subsection. By setting $\epsilon=0$, problem $\mathscr{P}_4$ is reduced to a Mixed-Integer Linear Programming (MILP) as follows
\begin{eqnarray}
\mathscr{P}_5:
\mathop {\textrm{minimize}}_{{\bm{X}},{\bm{b}}}&& \bm{1}^T\bm{b}\nonumber\\
\textrm{subject to}&& -{\bm{S}}{\bm{X}}+{\bm{Y}}=0,\nonumber\\
&& -\beta b_i \leq {\textbf{X}}_{ij} \leq \beta b_i\nonumber,\\
&& b_i\in \{0,1\}.
\end{eqnarray}
\begin{lemma}
In the noiseless scenario, we have $\bm{Y}=\bm{S}\bm{X}^*$. Given $\bm{Y}$ and $\bm{S}$, by minimizing $||{\bm{X}}^1||_0$, we can not recover $\bm{X}^*$ if $L \leq K$. In other words, to recover $\bm{X^*}$, the minimum length of the sequence is $L=K+1$. 
\end{lemma}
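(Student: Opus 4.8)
The plan is to reduce the claim to a statement about a single column of $\bm{X}^*$ and then treat the two inequalities separately: an explicit feasible-point construction for the ``$L\le K$ fails'' half, and a dimension count for the ``$L=K+1$ suffices'' half. Let $T\subseteq\mathcal{N}$ be the common support of the columns of $\bm{X}^*$, so that $|T|=K$, and recall that by Lemma~2 problem $\mathscr{P}_5$ amounts to minimizing the group sparsity of $\bm{X}$ subject to $\bm{S}\bm{X}=\bm{Y}$; ``recovering $\bm{X}^*$'' means that this minimizer is unique and equal to $\bm{X}^*$.

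For the non-recovery half I would exhibit a feasible point of $\mathscr{P}_5$ that is at least as sparse as $\bm{X}^*$ but different from it. Choose any $T'\subseteq\mathcal{N}$ with $|T'|=L$ and $T'\neq T$ (possible since $N\gg L$); for concreteness take $T'$ disjoint from $T$. Since $\bm{S}$ has i.i.d.\ Gaussian entries, the $L\times L$ submatrix $\bm{S}_{T'}$ is invertible almost surely, so set $\bm{X}'$ to have rows $\bm{X}'_{T'}=\bm{S}_{T'}^{-1}\bm{Y}$ and all other rows zero. Then $\bm{S}\bm{X}'=\bm{Y}$, the matrix $\bm{X}'$ has the required common-support structure, and its group sparsity is at most $|T'|=L\le K$. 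If $L<K$, then $\bm{X}'$ is strictly sparser than $\bm{X}^*$, so $\bm{X}^*$ is not optimal; if $L=K$, then (a.s., since $\bm{S}_{T'}^{-1}\bm{Y}$ has no zero rows) $\bm{X}'$ has group sparsity exactly $K$ while $T'\neq T$ gives $\bm{X}'\neq\bm{X}^*$, so the minimizer is non-unique. In either case solving $\mathscr{P}_5$ does not return $\bm{X}^*$, which proves that $L\ge K+1$ is necessary.

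To see that $L=K+1$ is enough, I would argue that for generic $\bm{S}$ (and for $\bm{X}^*$ outside a measure-zero set) no matrix $\bm{X}'\neq\bm{X}^*$ with group sparsity at most $K$ satisfies $\bm{S}\bm{X}'=\bm{Y}$, so that $\bm{X}^*$ is the unique minimizer. Suppose such an $\bm{X}'$ had common support $T'$ with $|T'|\le K$; then each column $\bm{Y}^j$ lies in $\mathrm{col}(\bm{S}_{T'})\cap\mathrm{col}(\bm{S}_T)$. Because $L=K+1>K$, the map $\bm{S}_T:\mathbb{C}^K\to\mathbb{C}^{K+1}$ is injective almost surely, so $\bm{Y}^j=\bm{S}_T\bm{X}^{*j}_T$ determines $\bm{X}^{*j}_T$ uniquely. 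If $T'\subseteq T$ and $T'\neq T$, this forces the rows of $\bm{X}^*$ indexed by $T\setminus T'$ to vanish, contradicting that $T$ is the exact support. If instead $T'$ contains an index outside $T$, then $\mathrm{col}(\bm{S}_{T'})$ and $\mathrm{col}(\bm{S}_T)$ are in general position in $\mathbb{C}^{K+1}$, so $\mathrm{col}(\bm{S}_{T'})\cap\mathrm{col}(\bm{S}_T)$ is a proper subspace of $\mathrm{col}(\bm{S}_T)$; requiring some $\bm{X}^{*j}_T$ to map into it is a measure-zero condition, and there are only finitely many candidate sets $T'$. Hence generically $\bm{X}^*$ is the unique optimum of $\mathscr{P}_5$, so $L=K+1$ works and is the minimum. (The dividing line is precisely that $\mathrm{col}(\bm{S}_T)=\mathbb{C}^L$ when $L=K$ but is a proper subspace when $L=K+1$.)

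The construction in the second paragraph is elementary; the hard part will be the sufficiency claim, which is inherently a genericity statement. The work there lies in pinning down the precise sense in which ``$L=K+1$ recovers $\bm{X}^*$'' --- almost surely over the Gaussian $\bm{S}$ and for all $\bm{X}^*$ avoiding an explicit measure-zero set --- and in justifying, uniformly over the finitely many supports $T'$, that random Gaussian submatrices put $\mathrm{col}(\bm{S}_{T'})$ and $\mathrm{col}(\bm{S}_T)$ in general position. If one only wants the headline that the minimum sequence length is $K+1$, the non-recovery argument already supplies the necessary direction, and the sufficiency can be stated in this probabilistic form.
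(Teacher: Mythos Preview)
Your approach is essentially the same as the paper's: for $L\le K$ you both build an alternative feasible point by inverting an $L\times L$ Gaussian submatrix on a support $T'\neq T$, and for $L=K+1$ you both argue that no competing support $T'$ of size at most $K$ can carry $\bm{Y}$, using the almost-sure linear independence of Gaussian columns. Your write-up is in fact more careful than the paper's on the sufficiency side --- you make explicit that the statement is a genericity claim over $\bm{S}$ (and, where needed, over $\bm{X}^*$) and you handle the two cases $T'\subseteq T$ and $T'\not\subseteq T$ separately via a clean column-space intersection/dimension count, whereas the paper leans on a range-space identity $R([\bm{S}[:,I_l]\,|\,\bm{Y}])=R([\bm{S}[:,I_l]\,|\,\bm{S}[:,I_K]])$ that is not quite right as stated.
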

\begin{proof}
We define that $\bm{S}[:,I_l]$ and $\bm{X}[I_l,:]$ are $l$ columns of $\bm{S}$ and $\bm{X}$ respectively, and $I_l$ is the corresponding index set. By defining the non-zero row index set of $\bm{X^*}$ as $I_K$, we get $\bm{Y}=\bm{S}[:,I_K]\bm{X^*}[I_K,:]$, $\bm{X^*}[I_K^\mathsf{c},:]=0$. Solving the problem $\mathscr{P}_5$ is equivalent to finding the minimum of $l$ to ensure $I_l$ satisfies $\bm{Y}=\bm{S}[:,I_l]\bm{X}[I_l,:]$ and  $\bm{X}[I_l^\mathsf{c},:]=0$.
\par The signature sequence is generated according to the complex Gaussian distribution, i.e, $\bm{S}\sim \mathcal{CN}(0,1)$. We define $P(\bm{S}^i=c\bm{S}^j)$ as the probability of $\bm{S}^i=c\bm{S}^j$ ($i\not=j$ and $c$ is a constant). Because of the continuity of the probability density function, we get $P(\bm{S}^i=c\bm{S}^j)=0$. Similarly, we have $P(\bm{S}^i=c_1\bm{S}^j+c_2\bm{S}^k)=0$. In other words, any $l$ ($l<L$) columns of $\bm{S}$ are linearly independent. 
\par In this part, we talk about the solution of the overdetermined equations $\bm{Y}=\bm{S}[:,I_l]\bm{X}[I_l,:]$ ($l<L$). Each column of $\bm{Y}$ is a linear combinations of $I_K$ columns in $\bm{S}$. We have $R([\bm{S}[:,I_l]|Y]$)=$R([\bm{S}[:,I_l]|\bm{S}[:,I_K]]$), where $R(\cdot)$ is the range space of the input matrix and $[\bm{S}[:,I_l]|Y]$ is the augmented matrix. Since any columns of $\bm{S}[:,I_l]$ are linearly independent, we have $R([\bm{S}[:,I_l]|Y]$)=$R(\bm{S}[:,I_l]$) and there is a solution of the overdetermined equations if and only if $I_K\subset I_l$.
\par Consider the situation of $L\le K$ and $L\textgreater K$ respectively. In the situation of $l \textless L \textless K$, $I_K \not \subset I_l$ and there is no solution of the overdetermined equations $\bm{Y}=\bm{S}[:,I_l]\bm{X}[I_l,:]$. When $l=L\le K$, the equations $\bm{Y}=\bm{S}[:,I_l]\bm{X}[I_l,:]$ have a solution $\bm{X}[I_l,:]=\bm{S}[:,I_L]^{-1}Y$. Due to the randomness of $I_L$, there are $C_N^L$ solutions and the sparsity of the solutions is less than or same as $\bm{X}^*$. Thus, $\bm{X}^*$ can not be recover exactly in this case. Consider the situation of $L\textgreater K$. When $k<L$, the equations $\bm{Y}=\bm{S}[:,I_k]\bm{X}[I_k,:]$ have a solution $\bm{X}^*[I_K,:]$ if and only if $I_{k}$=$I_{K}$. In this case, we can recover $\bm{X^*}$. 
\end{proof}
\section{Simulation Results}
In this section, we present the simulation results of the proposed global optimal algorithm for solving problem $\mathscr{P}_4$, and compare the results obtained by using the convex relax method for solving problem $\mathscr{P}_2$ and $\mathscr{P}_3$. 
In the simulations, the signature sequence is generated according to the complex Gaussian distribution, i.e, $\bm{S}\sim \mathcal{CN}(0,1)$, and the channels suffer from independent Rayleigh fading, i.e., $\bm{H}\sim \mathcal{CN}(0,1)$. We set the  total number of devices $N$, the number of active devices $K$, and  the number of antennas at the BS $M$ to be 30, 5, and 2, respectively. We adopt normalized mean square error (NMSE) to evaluate the performance of those methods:
\begin{gather}
NMSE=10\log_{10}\Bigg(\frac{E||{\bm{X}}-{\bm{X}^*}||_F^2}{E{||{\bm{X}}^*||_F^2}}\Bigg),
\end{gather}
where ${\bm{X}}$ is the estimate solution and ${\bm{X}}^*$ is the ground truth.

\begin{figure}[t]
	\centering
	\includegraphics[scale=0.4]{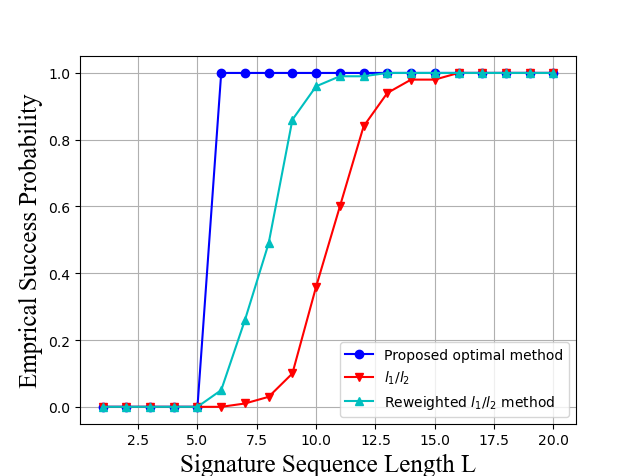}
	\caption{Success probability versus signature sequence length in the noiseless case when $N=30,K=5$ and $M=2$.}
	\label{fig:label}
	\vspace{-5mm}
\end{figure}

\par We show the success probability of different methods versus the length of signature sequence $L$ in a noiseless scenario, as shown in Fig. 1. We declare successful recovery if $||\hat{\textbf{X}}-{\textbf{X}}||_F$ $\leq 10^{-5}$ and each point is obtained by averaging over 100 times. In Fig. 1, we can observe that the recovery success probability of each method increases as the sequence length increases. To recover $\bm{X}^*$ successfully, the shortest sequence lengths of proposed method, $l_1/l_2$ method and reweighted $l_1/l_2$ method are $L=6, L=16$, and $L=11$, respectively. The performance improves $30\%$ compare with $l_1/l_2$ method by using the reweighted method. The proposed method requires the shortest sequence length for successful recovery, which is far less than others.
\begin{figure}[t]
	\centering
	\includegraphics[scale=0.4]{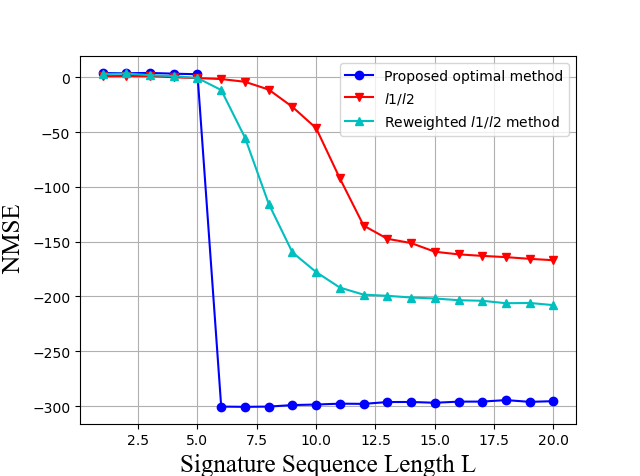}
	\caption{NMSE in the noiseless scenario when $N=30,K=5,M=2$ and $SNR=30$.}
	\label{fig:label}
	\vspace{-5mm}
\end{figure} 
\par To compare the performance more accurately, we show the NMSE of different methods versus the length of signature sequence $L$ in a noiseless scenario, as shown in Fig. 2. 
We can observe that the proposed method not only requires a shorter signature sequence, but also achieves a smaller NMSE. 
 The NMSE of the proposed method has a sharp drop when $L=6$ and it converges to $-300$ directly.
\begin{figure}[t]
	\centering
	\includegraphics[scale=0.4]{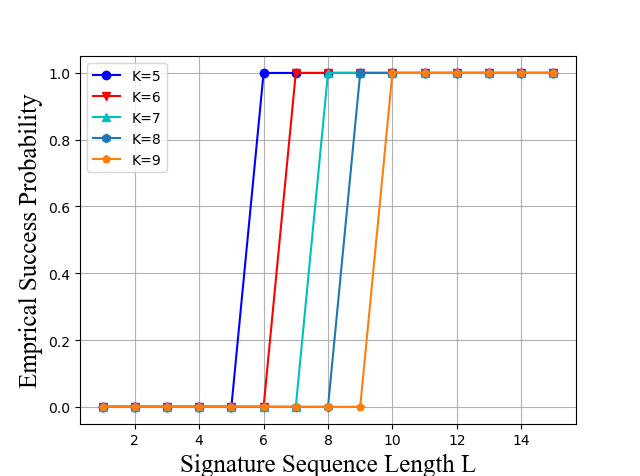}
	\caption{The relationship between $K$ and $L$ in the noiseless scenario when $N=30, M=2$.}
	\label{fig:label}
	\vspace{-5mm}
\end{figure}
\par It is obvious from Fig. 3 that ${\bm{X}}$ can be exactly recovered when $L=K+1$,  which is consistent with Lemma 3. Without any priori knowledge, we need at least $L=K+1$ to achieve exact recovery. We can reduce the overhead effectively for massive connectivity by using the proposed method.
\begin{figure}[t]
	\centering
	\includegraphics[scale=0.4]{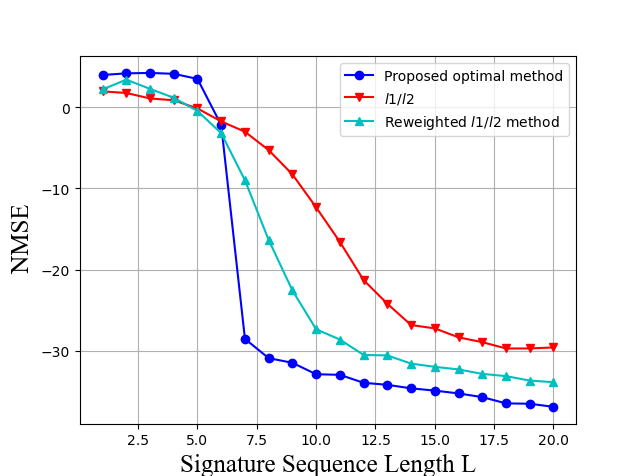}
	\caption{NMSE in the noisy scenario when $N=30,K=5,M=2$ and $SNR=30$.}
	\label{fig:label}
	\vspace{-5mm}
\end{figure}

\par According to Figs. 2 and 4, we observe that the performance is worse than that in the noiseless scenario. But the performance of our method is also the best in the noisy scenario. It also has a requires a shorter signature sequence and a smaller NMSE. Using the proposed method, we can obtain an optimal solution of problem $\mathscr{P}_1$, and we can adopt it as an upper bound to evaluate the effectiveness of the existing algorithms.
\section{Conclusion}
In this paper, we proposed a novel method to address the group sparse estimation problem in IoT networks. We transform the  original problem into an equivalent MIP problem and use CPLEX slover to solve it. Simulation results show that the performance of our method is better than $l_1$-reweighted method and convex relaxation method in both noiseless and noisy scenarios. It provides an optimal solution about the number of measurements to sparse estimation problem without priori. We can adopt the solution as an upper bound to evaluate the effectiveness of the existing algorithms. In noiseless scenario, we just need sequence length larger than the number of active devices to recover the estimated matrix exactly.
\appendices

\bibliographystyle{IEEEtran}
\bibliography{ref}
\end{document}